\documentclass[sigconf,nonacm]{acmart}

\usepackage{caption}
\DeclareCaptionType{copyrightbox}
\usepackage{rotating}
\usepackage[flushleft]{threeparttable}
\usepackage{placeins}
\usepackage[all]{xy}
\usepackage{xcolor}
\usepackage{subcaption}
\usepackage{amsmath, bm}

\usepackage{xcolor}
\usepackage{colortbl}

\usepackage{algorithm}
\usepackage[noend]{algorithmic}
\usepackage{comment}

\usepackage{amsthm}

\newtheorem{lemma}{Lemma}[section]

\newcommand{\chase}{\textsc{Chase}}
\newcommand{\cbchase}{\textsc{Chase}\&\textsc{Backchase}}
\newcommand{\bchase}{\textsc{Backchase}}

\newcommand{\where}{\textbf{\textit{where}}}

\newcommand{\why}{\textbf{\textit{why}}}

\newcommand{\how}{\textbf{\textit{how}}}
\newcommand{\How}{\textbf{\textit{How}}}

\usepackage{cancel}
\definecolor{Red}{rgb}{1,0,0}
\newcommand{\colorcancel}[2]{\renewcommand{\CancelColor}{\color{#2}}\cancel{#1}} 

\begin{document}

\title{Enhanced Inversion of Schema Evolution with Provenance}

\author{Tanja Auge}
\email{tanja.auge@uni-rostock.de}
\affiliation{%
\institution{University of Rostock}
\city{Rostock}
\country{Germany}}

\author{Andreas Heuer}
\email{andreas.heuer@uni-rostock.de}
\affiliation{%
\institution{University of Rostock}
\city{Rostock}
\country{Germany}}

\renewcommand{\shortauthors}{Auge and Heuer}

\pagestyle{empty}

\begin{abstract}
Long-term data-driven studies have become indispensable in many areas of science. Often, the data formats, structures and semantics of data change over time, the data sets evolve. Therefore, studies over several decades in particular have to consider changing database schemas. The evolution of these databases lead at some point to a large number of schemas, which have to be stored and managed, costly and time-consuming. However, in the sense of reproducibility of research data each database version must be reconstructable with little effort. So a previously published result can be validated and reproduced at any time. 

Nevertheless, in many cases, such an evolution can not be fully reconstructed. This article classifies the 15 most frequently used schema modification operators and defines the associated inverses for each operation. For avoiding an information loss, it furthermore defines which additional provenance information have to be stored. We define four classes dealing with dangling tuples, duplicates and provenance-invariant operators. Each class will be presented by one representative. 

By using and extending the theory of schema mappings and their inverses for queries, data analysis, \why-provenance, and schema evolution, we are able to combine data analysis applications with provenance under evolving database structures, in order to enable the reproducibility of scientific results over longer periods of time. While most of the inverses of schema mappings used for analysis or evolution are not exact, but only quasi-inverses, adding provenance information enables us to reconstruct a sub-database of research data that is sufficient to guarantee  reproducibility. 
\end{abstract}

\keywords{\chase-Inverses, Schema Evolution, Data Provenance, Research Data Management, Reproducibility}

\maketitle

\section{Introduction}
Research institutions all around the world produce huge amounts of research data, which have to be managed by \textit{research data management}. This includes collecting, evaluating, analyzing, archiving, and publishing the original or processed research data as well as the results of (data) analyses, theoretical considerations, or other scientific investigations. In this paper we are focusing on structured data, represented as a relational database, generated in scientific experiments, measurement series or always-on sensors.

The presentation and publication of research results increasingly requires publishing the corresponding research data, which ensures the findability, accessibility, inter-operability, and re-usability in the sense of \textit{FAIR-Principles}\footnote{\url{https://www.go-fair.org/}}. FAIR does not require the publication of all source data. It is sufficient to specify the source tuples necessary for the calculation of a specific research result. To perform this calculation, we use the so-called \textit{\chase\ algorithm}  supplemented with provenance information such as witness bases \cite{BKT01}, provenance polynomials \cite{GKT07} and/or additional side tables.

The \chase\ is a procedure that modifies a database instance $I$ by incorporating a set of dependencies $\Sigma$ represented as (s-t) tgds or egds. Originally developed for database design and semantic query optimization, the \chase\ nowadays is also used for database applications such as data exchange, data integration or answering queries using views. Formulated as schema mapping $\mathcal{M} = (I,J,Q)$ using s-t tgds, i.e. a dependency from one database $I$ into a second database $J$, the \chase\ can process database queries $Q$. This is equivalent to a mapping from red to green in Figure \ref{fig:chase}. Thus, the \chase\ can be used for evaluating conjunctive queries or simple SQL queries, such as \texttt{SELECT n,m FROM F,S WHERE F.m=S.m}. We call these queries against the research data \textit{evaluation queries}.

\begin{figure}[hb]
\centering
\includegraphics[width=0.75\linewidth]{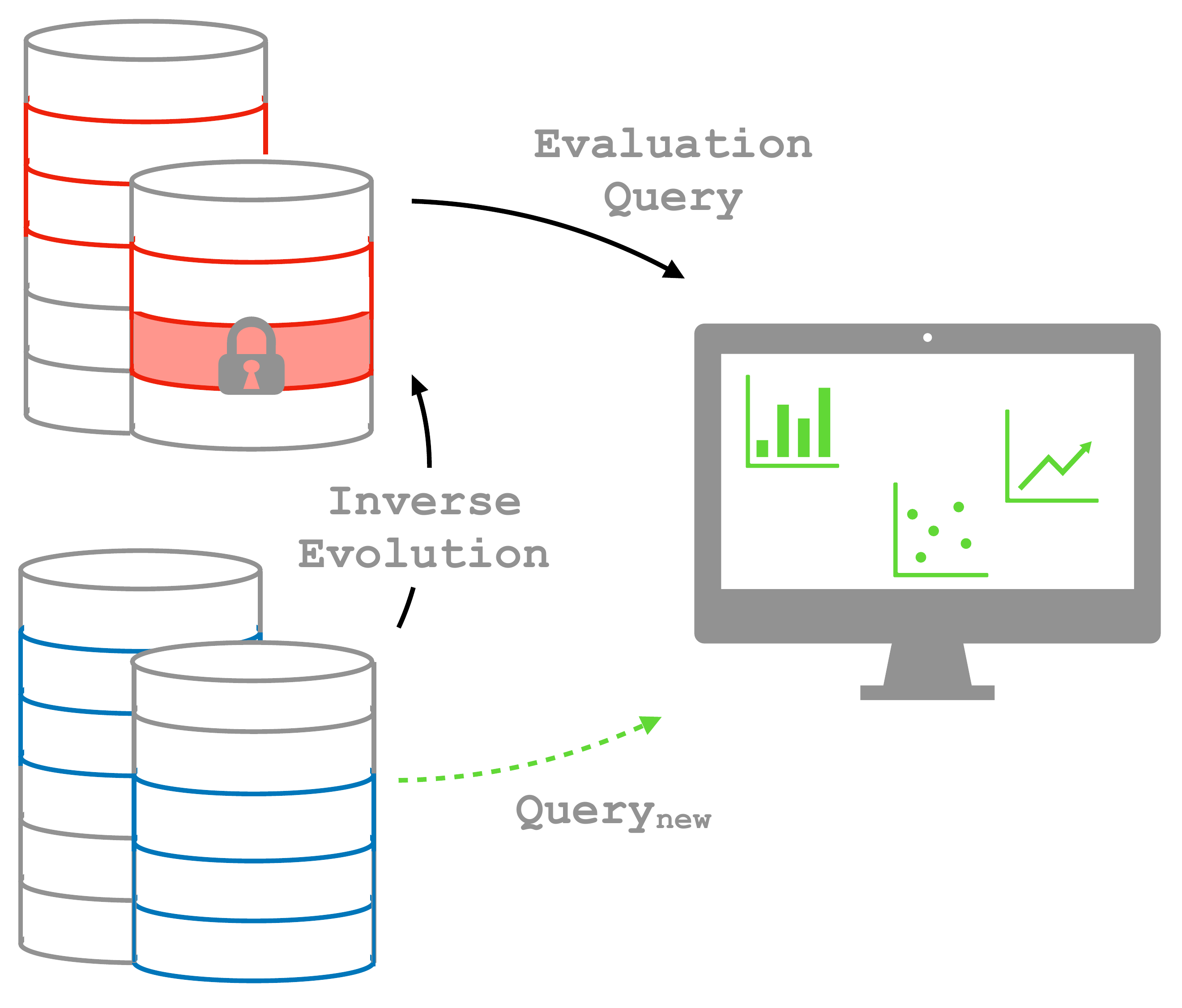}
\caption{Chasing evaluation queries and evolution mappings}
\label{fig:chase}
\end{figure}

In addition, the \chase\ can be used for inverting queries. For this, we choose $J$ as database instance and the inverse evaluation query $Q^{-1}$ as schema mapping $\mathcal{M}^\ast = (J,I,Q^{-1})$ formalized as s-t tgd like before. Running both evaluation query and inversion in sequence, this results in a \cbchase\ process \cite{DPT99}. Here the \bchase\ phase corresponds to a second \chase\ phase with the \chase\ result of the first phase as input. However, the result $I^\ast=\chase_{\mathcal{M}^\ast}(\chase_\mathcal{M}(I))$ after applying \cbchase\ to $I$, $I^\ast$ may not be equal to $I$. In general, $I^\ast$ is a subset of $I$, called \textit{(minimal) sub-database} (highlighted in red, Figure \ref{fig:chase}). The corresponding inverse, calculated with the \chase\, is called \textit{\chase-inverse}. Such inverses are not necessarily exact and unique. They are so-called \textit{pseudo-inverse functions} of different granularity \cite{FKPT11,AH18b}.

In research data management, not only the data is changing (e.g. by adding results of new experiments), but also the data's schemas. Changing schemas often results in freezing the entire research database to be able to perform the evaluation queries used so far. However, this can be time-consuming, expensive and leads to a dramatically increasing amount of data to be archived. Formulating a schema evolution $E$ as a schema mapping by s-t tgds allows processing it by the \chase\ just like the evaluation queries before. The composition of inverse evolution steps and original evaluation queries now enables the reconstruction of research data for changing database structures, summarized in Figure \ref{fig:chase}. 

At first glance, choosing the \chase\ algorithm may seem a bit unusual, but the choice has some significant benefits:
\begin{enumerate}
\item Both schema evolution and simple evaluation queries can be interpreted as schema mappings and processed with the \chase\ algorithm \cite{FKPT11}. So, both mappings can be treated with the same theory.
\item An elaborated theory about quasi-inverses of schema mappings as well as their composition already exists \cite{FKPT08,FKPT11}.
\item An examination of different types of inverses of the evaluation queries can be found in \cite{AH18b}. It can be improved by additional provenance information. This allows the reconstruction of lost null tuples and duplicates, as well as the replacement of null values by concrete values.
\end{enumerate}
Interpreting evolution and evaluation as schema mapping and processing by \cbchase\ enables the reconstruction of a research result even in the presence of changing database structures (see Figure \ref{fig:chase}). For this, an evolution mapping, represented as \textit{schema modification operator}, an evaluation query as well as the two inverses are formulated as s-t tgds. The \chase\ calculates the query result, which is then processed by the \bchase. This provides the (minimal) sub-database in question (highlighted in \textcolor{red}{red}). 

The inverse type can be used to determine how much and what part of the original data can be recovered without additional effort. In \cite{AH18b} the \chase-inverse types of 14 typical evaluation operators were specified with and without using data provenance. Provenance deals with the traceability of a result back to its original, possibly physical object. In the case of data provenance, we are interested in where a result tuple comes from, why and how it is calculated. 
In \cite{AH18b}, it is observed that provenance improves the inverse types in about half of the cases. 

In this article, we will perform this analysis for the evolution. For this purpose, we are looking at 15 schema modification operators, which are particularly common in practice \cite{CMZ08,CMTZ08,AMJFH20,GZ12,WN11,QLS13}, and verify their invertibility. Then we improve the inverse type again by adding provenance. This again improves about half of the cases. Other approaches for combining schema evolution and provenance are given in \cite{GAMH10} and \cite{GZ12}. In total, we identify four classes of invertibility. For each class we consider one representative in more detail (see Table \ref{tab:SMOInverseExtract}).

\paragraph{\textbf{Structure of the Article}}
To determine the relevant original tuples, we use the \chase\ algorithm not only for processing an evaluation query but also for evolution, see Section \ref{sec:inverses}. For this purpose, in Section \ref{sec:Examples}, we divide schema modification operators into four classes and examine five representatives with respect to their \chase-inverses with and without additional provenance information. Before that, in Section \ref{sec:SotA}, we introduce the necessary definitions regarding \chase\ and data provenance.

\section{Basics and Related Work}
\label{sec:SotA}

This article combines \chase, inverse schema mappings and data provenance from different areas of database theory. We introduce the most important terms here, further details can be found in the respective literature.

\paragraph{\textbf{\chase}}

The \textit{\chase\ algorithm} is a technique used in a number of data management tasks like data exchange, data cleaning or answering queries using views. Its wide range of applicability can be attributed to the fact that \chase\ generalizes well. Generally speaking, it incorporates a set of dependencies $\ast$ into a given object $\bigcirc$ -- usually an instance or a query. Only condition: The dependencies must be representable as (s-t) tgds and egds.

A \textit{source-to-target tuple generating dependency} (s-t tgd) is a formula of the form $\forall x : (\phi(x) \rightarrow \exists y : \psi(x, y))$, with $x$,$y$ tuples of variables, and $\phi(x)$, $\psi(x,y)$ are conjunctions of atoms, called \textit{body} and \textit{head}. It can be seen as inter-database dependency, representing a constraint within a database. An intra-database dependency is called \textit{tuple generating dependency} (tgd). A dependency with equality atoms in the head \textit{equality generating dependency} (egd). 

In the case of instances, chasing (s-t) tgds create new tuples and chasing egds clean the database by replacing null values (by other null values or constants). So, the \chase d database satisfies all dependencies of $\ast$. 
We use the \chase\ to process schema evolution. Thus, we deal with inter-database mapping and can restrict ourselves to s-t tgds in the following. Since chasing s-t tgds always terminates, we refer to \cite{BKT17,GMS12} for further explanations about \chase\ variants and termination criteria.

\setlength{\tabcolsep}{0.5em} 
{\renewcommand{\arraystretch}{1.2}
\begin{table*}[ht]
\caption{Sufficient and necessary conditions of \chase-inverses, based on \cite{FKPT11,AH18b}}
\label{tab:CHASEinverse}
\center
\small
\scalebox{0.95}{
\begin{tabular}{|l|c|l|}
\hline
\multicolumn{1}{|l|}{\textbf{\chase-inverse}} & \textbf{sufficient condition} & \textbf{necessary condition} \\
\hline \hline
exact ($=$) & $I = I^\ast$  & ~$I^\ast = I$ \\
\hline 
classical ($\equiv$) & $=$ & $\exists$ homomorphism $h$ : $I^\ast \rightarrow I$, $\exists$ homomorphism $h'$ : $I \rightarrow I^\ast$ \\ 
\hline 
tp-relaxed $(\preccurlyeq_\textrm{tp})$ & $=$ & $\exists$ homomorphism $h$ : $I^\ast \rightarrow I$, $\mid I^\ast \mid = \mid I \mid$, $I^\ast \leftrightarrow_\mathcal{M} I$ \\  
\hline 
relaxed ($\preccurlyeq$) & $\preccurlyeq_\textrm{tp}$ or $\equiv$ & $\exists$ homomorphism $h$: $I^\ast \rightarrow I$, $I^\ast \leftrightarrow_\mathcal{M} I$ \\
\hline 
result equivalent ($\leftrightarrow$) & $\preccurlyeq$ & $I^\ast \leftrightarrow_\mathcal{M} I$ \\ 
\hline 
\multicolumn{3}{c}{}
\end{tabular}} 
\end{table*}}

\paragraph{\textbf{\chase-inverse functions}}

Let $I$ and $J$ be two database instances before and after applying the \chase, i.e. $J = \chase(I)$. A \textit{\chase-inverse} is a function from $J$ to $I^\ast$, which applies the \chase\ algorithm to $J$ with $I^\ast$ section of $I$. $I^\ast$ is a \textit{section} of $I$ (short: $I^\ast \preccurlyeq I$) iff a homomorphism $h:I^\ast \rightarrow I$ exists, whereby constants are mapped to constants and null values are mapped to constants or (other) null values \cite{AH18b}. \chase-inverse functions belong to the quasi-inverse functions. Formally, we call $g: I^\ast \rightarrow I$ a \textit{quasi-inverse function} to $f:I \rightarrow J$ iff $I^\ast \preccurlyeq I$ and $f \circ g \circ f = f$ \cite{FKPT08}. This means that quasi-inverse functions usually recover the original data only partially. 
In the case of \chase\ five inverse types are distinguish: exact, classical, relaxed \cite{FKPT11}, tuple-preserving relaxed and result equivalent \cite{AH18b}. The sufficient and necessary conditions of the different \chase-inverse types can be seen in Table \ref{tab:CHASEinverse}. 

For this, let $\mathcal{M}$ be a schema mapping, $I$ and $I^\ast$ two database instances with $I^\ast = \chase_{\mathcal{M}^\ast}(\chase_\mathcal{M}(I)).$ $I^\ast$ is called \textit{sub-database} of $I$.
A \textit{schema mapping} $\mathcal{M}$ can be formalized as a triple $(S_t,S_{t+1},\Sigma)$ with $S_t$ source schema, $S_{t+1}$ target schema and $\Sigma$ set of dependencies that specify the relationship between $S_{t}$ and $S_{t+1}$. The affiliated inverse mapping is defined as $\mathcal{M}^\ast = (S_{t+1} , S_t , \Sigma^{-1})$. The star $^\ast$ symbolizes that the inverse function $\mathcal{M}^\ast$ may not be exact or unique. Again it is a quasi-inverse function.

While an \textit{exact \chase-inverse} ($=$) reconstructs the original database instance $I$, i.e. $I^\ast = I$, the \textit{classical \chase-inverse} ($\equiv$) only returns an instance $I^\ast$, which is equivalent to the original database instance $I$, i.e. $I^\ast$ and $I$ are equal except for isomorphism. The \textit{relaxed \chase-inverse} ($\preccurlyeq$) does not require an equivalent relationship between the original instance $I$ and the recovered instance $I^\ast$, but data exchange equivalence (short: ${I \leftrightarrow_\mathcal{M} I^\ast}$) and the existence of a homomorphism between $I^\ast$ and $I$ \cite{FKPT08}.

To preserve the number of tuples, the definition of relaxed \chase-inverse is extended \cite{AH18b}. The corresponding function, called \textit{tp-relaxed \chase-inverse} ($\preccurlyeq_\textrm{tp}$), additionally requires $\mid\! I^\ast\!\mid = \mid\!I\!\mid$. In contrast, a weakened definition is the \textit{result equivalent \chase-inverse} ($\leftrightarrow$), which only satisfies data exchange equivalence, i.e. $\chase_\mathcal{M}(I) \equiv \chase_\mathcal{M}(I^\ast)$.

As studied in \cite{AH18b}, evaluation queries formalised as s-t tgds and applied by \chase\ are classified as exact, (tp-)relaxed or result equivalent depending on query under consideration. The addition of provenance information like provenance polynomials \cite{GT17} and (minimal) witness basis \cite{BKT01} allows the specification of stronger \chase-inverses than without. For example, in the case of projection or union, a stronger \chase-inverse can be constructed by using provenance. For other operations, such as selection, the inverse type can not be improved despite additional information.

\paragraph{\textbf{\chase\&\bchase}}
A schema mapping can be processed using the \chase\ and be inverted using the \bchase\ \cite{DPT99}. The \chase\ phase computes a modified instance $J$ generated from $I$ by applying the \chase\ using a schema mapping $\mathcal{M} = (S_t,S_{t+1},\Sigma)$. After that, the \bchase\ phase computes an instance $I^\ast$ by applying the \chase\ to $J$ using a schema mapping $\mathcal{M}^\ast = (S_{t+1},S_t,\Sigma^{-1})$:
$$I^\ast = \underbrace{\chase_{\mathcal{M}^\ast}(J)}_\bchase=\chase_{\mathcal{M}^\ast}(\underbrace{\chase_\mathcal{M}(I)}_\chase).$$
The calculated instance $I^\ast$ can, but must not be sub-database of $I$, short $I^\ast \preccurlyeq I$. This depends on the inverse type defined above. 

\paragraph{\textbf{Data Provenance}}
Depending on the authors (see \cite{HDL17} or \cite{PRSA18}), a distinction is made between different types of provenance. These include data provenance, workflow provenance and metadata provenance. The questions to be answered depend on the provenance type and can be answered in different ways. In this article, we will focus on data provenance. 

Let be $I$ database instance and $Q$ query. \textit{Data pro\-venance} describes (1) where a result tuple $r \in Q(I)$ does come from, (2) why and (3) how $r$ exists in the result $Q(I)$. This can be answered extensively by specifying concrete tuples and database values, intensionally by characterizing a set of database values \cite{Mot89} or query-based by query transformation \cite{BHT14}. We consider only extensional answers. 

The so-called \why-provenance \cite{BKT01} specifies a witness basis that identifies the tuples involved in the calculation of $r$. This tuple-based basis contains all ids necessary for reconstructing $r$. \How-provenance, on the other hand, uses provenance polynomials \cite{GT17,Sel17} to explain how a result tuple $r$ is calculated. The polynomials are defined by a commutative semi-ring $(\mathbb{N}\lbrack X\rbrack,+,\cdot,0,1)$ with $+$ for union and projection, as well as $\cdot$ for natural join \cite{GKT07}. 
If we are simply interested in identifying ID or relation name of the tuples involved, we choose \where-provenance. It is the weakest of the three provenance questions, but is already sufficient in many cases.

\paragraph{\textbf{Schema Evolution}}
Schema evolution deals with the problem of maintaining schema and functionality of a software system considering a changing data structure. It can be addressed by so-called \textit{Schema Modification Operators} (SMOs) that support semantic conversion of columns, column concatenation/splitting and other schema modifications. Common operators in the case of research data management include \texttt{ADD Column} and \texttt{MERGE Column} as well as \texttt{SPLIT Column} \cite{AMJFH20}. Other applications are described in \cite{QLS13,WN11}. For initial approaches to combining schema evolution and provenance, see \cite{GAMH10} and \cite{CMZ08}.

\section{SMOs and their \chase-inverses}
\label{sec:inverses}

Especially in the case of long-term data-driven studies, the specification of source data may become a problem. Therefore, we process not only the evaluation but also the evolution by the \chase\&\textsc{Back-chase} and extend it with data provenance. For this, we classify the 15 most frequently used schema modification operators (SMOs) and divide them into four classes. We define the associated inverses for each operation and examine it with additional provenance information. While the first class includes all operators with an exact inverse, i.e.\,the operators are provenance-invariant, the second and third class deal with dangling tuples and duplicates. The remaining operators we summarize in Class IV.

\paragraph{\textbf{\cbchase}}
Evaluation queries and schema evolution can be defined as schema mappings. Let be $S_t$, $S_{t+1}$ two database schemas and $\Sigma$, $\Sigma^{-1}$ two sets of dependencies, whereby $\Sigma$ corresponds to an evaluation query $Q$ and $\Sigma^{-1}$ corresponds to an schema modification operator. Then the evolution $E$ can be formalized as schema mapping $\mathcal{M} = (S_t,S_{t+1},\Sigma)$ and its corresponding inverse $E^{-1}$ as $\mathcal{M}^\ast = (S_{t+1},S_t,\Sigma^{-1})$. Let further be $I(S_t)$ and $J(S_{t+1})$ two database instances. Then the \cbchase\ calculates a (minimal) sub-database $$I^\ast(S_t) = \chase_{\mathcal{M}^\ast}(J(S_{t+1})) = \chase_{\mathcal{M}^\ast}(\chase_\mathcal{M}(I(S_t)))$$ (highlighted \textcolor{red}{red} in Figure \ref{fig:motivation}). If we are not interested in processing the entire instance $J$, we can alternatively process a subset $J^\ast \subset J$ in the \bchase\ (highlighted in \textcolor{blue}{blue}). For simplicity, we will use $J$ in the following. In summary, we state: Evolution is processed by the \chase\ in the so-called \textit{\chase\ phase}. The \textit{\bchase\ phase} corresponds to the inversion of the evolution.

Since we allow quasi-inverse functions to invert the evolution, we symbolize the inverse mapping by $^\ast$. The exact inverse mapping of $f$ used for merging and splitting, on the other hand, is denoted by $f^{-1}$, which we consider as additional provenance information.

\begin{figure}[ht]
\centering
\includegraphics[width=0.9\linewidth]{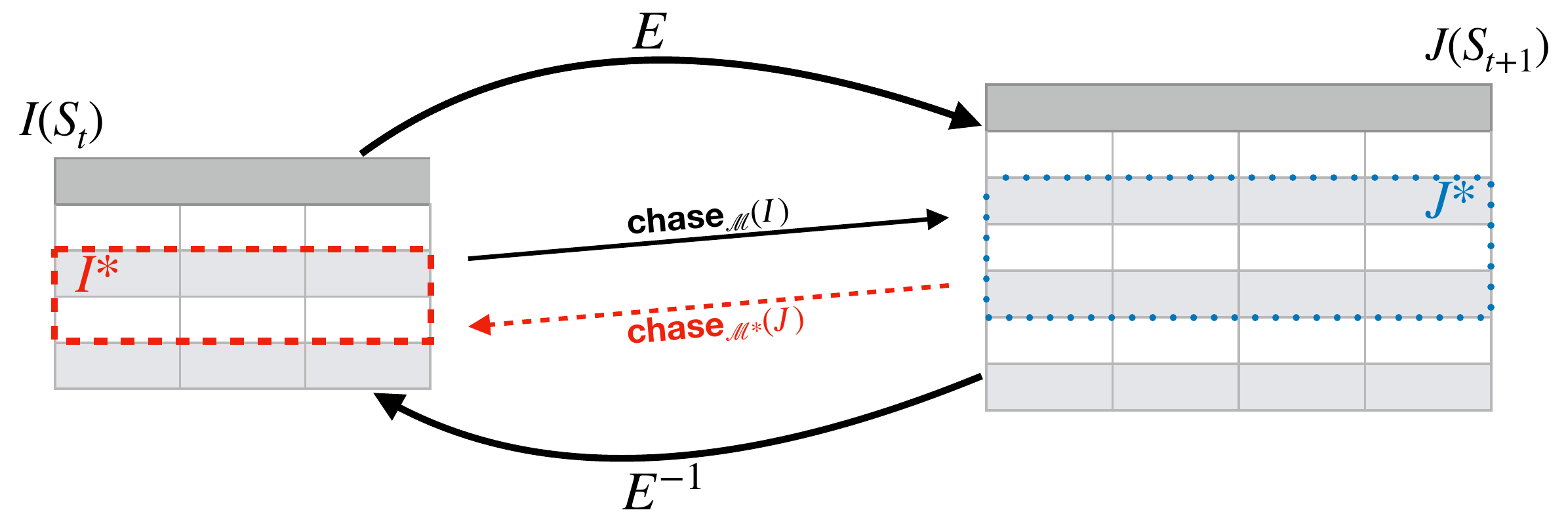}
\caption{Inverting schema evolution}
\label{fig:motivation}
\end{figure}

\paragraph{\textbf{Schema Modification Operators}}
The atomic schema changes during an evolution can be represented by so-called \textit{schema modification operators}. Based on \cite{CMZ08,CMTZ08,AMJFH20,GZ12,WN11,QLS13} and more we define 15 frequently used operators, summarized in Table \ref{tab:SMOinverse}. Some publications also speak of a sixteenth operator \texttt{NOP}, which performs no action but namespace management. 

Each operator has an associated (quasi-)inverse function. The invertibility of an operator corresponds to the existence of a perfect or quasi-inverse and its uniqueness. \texttt{ADD Column}, \texttt{MERGE Column}, \texttt{RENAME Column}, \texttt{SPLIT Column}, \texttt{CREATE Table}, \texttt{PARTITION Table}, \texttt{RENAME Table} and \texttt{NOP} has a unique quasi-inverse function. According to \cite{CMZ08}, \texttt{DROP Table}, \texttt{COPY Table}, \texttt{MERGE Table} and \texttt{DROP Column} have one or more quasi-inverses. Our studies show that the SMO formalization as s-t tgds always has a unique (quasi-)inverse. Therefore, another inverse can exist only in the case of a second s-t tgd formalization. However, this only applies to \texttt{COPY Table} and \texttt{DECOMPOSE Table} (see Table \ref{tab:SMOinverse}).

\begin{table}[ht]
\caption{SMOs with exact ${(=)}$, tp-relaxed $(\preccurlyeq_\textrm{tp})$, relaxed ${(\preccurlyeq)}$ or result equivalent ${(\leftrightarrow)}$ \chase-inverse extended by provenance (highlighted in \textcolor{orange}{orange}); dangling tuples are highlighted in \textcolor{magenta}{magenta} and duplicates in \textcolor{cyan}{cyan}}
\label{tab:SMOInverseExtract}
\setcounter{copyrightbox}{3}
\begingroup
\vfill
\centering
\begin{sideways} 
\scalebox{0.75}[0.9]{
\begin{threeparttable} 
\renewcommand{\arraystretch}{1.25}
\small
\begin{tabular}{|l||c|c|c|c|c|c|}
\hline
& & \multicolumn{2}{c|}{\textbf{Without Provenance}} & \multicolumn{3}{c|}{\textbf{With Provenance}} \\
\cline{2-7}
~\textbf{SMO} & $\Sigma$ & ~\textbf{Type}~ & $\Sigma^{-1}$ & ~\textbf{Type}~ & $\Sigma^{-1}$ & \textbf{Provenance} \\
\hline
\hline
\texttt{COPY Table} & $R(a,b,c) \rightarrow R'(a,b,c) \wedge V(a,b,c)$ & $=$ & $R'(a,b,c) \wedge V(a,b,c) \rightarrow R(a,b,c)$ & $=$ & $R'(a,b,c) \wedge V(a,b,c) \rightarrow R(a,b,c)$ & \\
\cline{2-7}
& $R(a,b,c) \rightarrow R'(a,b,c)$ & $=$ & $R'(a,b,c) \rightarrow R(a,b,c)$ & $=$ & $R'(a,b,c) \rightarrow R(a,b,c)$ & \\
& $R(a,b,c) \rightarrow V(a,b,c)$ & & $V(a,b,c) \rightarrow R(a,b,c)$ & & $V(a,b,c) \rightarrow R(a,b,c)$ & \\
\hline
\texttt{JOIN Table} & $R(a,b) \wedge V(a,c) \rightarrow T(a,b,c)$ & $=$ & $T(a,b,c) \rightarrow R(a,b) \wedge V(a,c)$ & $=$ & $T(a,b,c) \rightarrow R(a,b) \wedge V(a,c)$ & \\
\cline{3-7}
& & \textcolor{magenta}{$\pmb{\preccurlyeq}$} & $T(a,b,c) \rightarrow R(a,b) \wedge V(a,c)$ & $\preccurlyeq$ & $T(a,b,c) \rightarrow R(a,b) \wedge V(a,c)$ & \why, \how \\
\cline{5-7}
& & & & $\textcolor{orange}{=}$ & $T(a,b,c) \rightarrow R(a,b) \wedge V(a,c)$ & \why, \how \\
& & & & & \textcolor{orange}{+ reconstruction of lost tuples} & + side table \\
\hline
\texttt{MERGE Table} & $R(a,b,c) \rightarrow T(a,b,c)$, & $\leftrightarrow$ & $T(a,b,c) \rightarrow R(a,b,c)$, & \textcolor{orange}{$=$} & $T(a,b,c) \rightarrow R(a,b,c)$, & \why, \how \\
 & $V(a,b,c) \rightarrow T(a,b,c)$ & & $T(a,b,c) \rightarrow V(a,b,c)$ & & $T(a,b,c) \rightarrow V(a,b,c)$ & \\
 & & & & & \textcolor{orange}{+ restriction to tuples of $R$ resp. $V$} & \\
\hline \hline
\texttt{MERGE Column} & $R(a,b,c) \rightarrow T(a,f(b,c))$ & $\preccurlyeq_\textrm{tp}$ & $T(a,g) \rightarrow \exists D,E : R(a,D,E)$ & $\preccurlyeq_\textrm{tp}$ & $T(a,g) \rightarrow \exists D,E : R(a,D,E)$ & \\
\cline{3-7}
& & \textcolor{cyan}{$ \pmb{\preccurlyeq}$} & $T(a,b) \rightarrow \exists D,E : R(a,D,E)$ & \textcolor{orange}{$\preccurlyeq_\textrm{tp}$} & $T(a,g) \rightarrow \exists D,E : R(a,D,E)$ & \why, \how\\
& & & & & \textcolor{orange}{+ reconstruction of lost tuples (duplicates)} & \\
\cline{3-7}
& & & & \textcolor{orange}{$=$} & $T(a,g) \rightarrow R(a,f^{-1}(g,c),c)$ & \why, \how \\ 
& & & & & \textcolor{orange}{+ reconstruction of lost attribute values} & + inverse of $f$ \\
& & & & & & + side table\\
\cline{5-7}
& & & & \textcolor{cyan}{$\pmb{=}$} & $T(a,f(b,c)) \rightarrow R(a,f^{-1}(g,c),c)$ & \why, \how \\ 
& & & & & \textcolor{orange}{+ reconstruction of lost tuples (duplicates)} & + inverse of $f$ \\
& & & & & \textcolor{orange}{+ reconstruction of lost attribute values} & + side table \\
\hline
\end{tabular}
\medskip 
\end{threeparttable} 
}
\end{sideways}
\vfill
\endgroup
\end{table}

All operators can be easily formalized as set of one or at most two s-t tgds. Similar to our formalization is for example the formalization of \cite{CMZ08}. However, there is a crucial difference in the interpretation of \texttt{MERGE Table}. The authors define this operator by $R(a,b,c) \rightarrow T(a,b,c)$, $R(a,b,c) \rightarrow V(a,b,c)$ with associated inverse $T(a,b,c) \vee V(a,b,c) \rightarrow R(a,b,c)$. Since \chase-inverses are not defined for \textit{disjunctive embedded dependency}, we consider only s-t tgds. However, the information loss of using a s-t tgd here can be compensated by adding additional provenance information.

\begin{table*}[ht]
\caption{Schema modification operators with corresponding inverse and class}
\label{tab:SMOinverse}
\centering
\scalebox{0.85}{
\begin{tabular}{|l|l|l|c|}
\hline
\textbf{SMO} & \textbf{description} & \textbf{inverse} & \textbf{class} \\
\hline \hline
\texttt{COPY Table} & creates a duplicate of an existing table & \texttt{DROP Table} & I \\
& & \texttt{MERGE Table} & \\
\texttt{CREATE Table} & introduces a new, empty table & \texttt{DROP Table} & I \\
\texttt{DECOMPOSE Table} & decomposes a source table into two tables without changing the data stored inside & \texttt{ADD Column} & III \\
& & \texttt{JOIN Table} & \\
\texttt{DROP Table} & removes an existing table & \texttt{CREATE Table} & IV \\
\texttt{JOIN Table} & combines two tuples from different tables & \texttt{DECOMPOSE Table} & II \\
\texttt{MERGE Table} & takes two source tables and creates a new table that stores their union & \texttt{PARTITION Table} & IV \\
\texttt{PARTITION Table} & distributes tuples into two newly created tables according to the specified condition & \texttt{MERGE Table} & I \\
\texttt{RENAME Table} & changes a table name & \texttt{RENAME Table} & I \\
\hline
\texttt{ADD Column} & introduces a new column with values created by a user-defined constant or function & \texttt{DROP Column} & I \\
\texttt{COPY Column} & copies a column to another table according to the join condition & \texttt{DROP Column} & I \\
\texttt{DROP Column} & removes an existing column from a table & \texttt{ADD Column} & III \\
\texttt{MERGE Column} & sequence of \texttt{ADD Column} and \texttt{DROP Column} using a user-defined function & \texttt{SPLIT Column} & III \\
\texttt{MOVE Column} & same like \texttt{COPY Column}, but the original column is deleted & \texttt{MOVE Column} & II, III \\
\texttt{RENAME Column} & changes the name of a column & \texttt{RENAME Column} & I \\
\texttt{SPLIT Column} & sequence of \texttt{ADD Column} and \texttt{DROP Column} using a user-defined function & \texttt{MERGE Column} & III \\
\hline
\texttt{NOP} & nothing happens & \texttt{NOP} & I \\
\hline
\end{tabular}
}
\end{table*}

Some operators such as \texttt{PARTITION Table}, \texttt{COPY Column} and \texttt{MOVE Column} require special conditions $\textsf{cond}_A$, which give restrictions on the attributes. The corresponding s-t tgds 
are restricted to constant and attribute selection $\sigma_{A_i\theta c}$ or rather $\sigma_{A_i \theta A_j}$ with $\theta \in \{<, \le, =, \ge, >\}$. Other conditions can be formalized using additional side tables.

\textit{Side tables} contain attribute values required for the reconstruction of the (minimal) sub-database defined above. Each tuple is a restriction of a source tuple $t$ and is uniquely identified by its ID $r_t$. This ID is not used as a key, but as additional provenance information. Thus, side tables always exist only in association with a witness basis or a provenance polynomial.

\paragraph{\textbf{Inverse Types}}
Adding Provenance may improve the inverse type. However, in many cases this is not necessary at all. Thus, about half of the operators already guarantee an exact \chase\ inverse without requiring additional information. The associated operators -- \texttt{COPY Table}, \texttt{CREATE Table}, \texttt{PARTITION Table}, \texttt{RENAME Table}, \texttt{ADD Column}, \texttt{COPY Column} and \texttt{RENAME Column} plus \texttt{NOP} group together to form Class I. In the case of the remaining operators, additional provenance causes an improvement of the inverse type.

To cover the scope of \chase-inverse types as well as their provenance extensions, and to cover special cases such as dangling tuples and duplicates, we focus our analyses here on \texttt{COPY Table} (Class I), \texttt{JOIN Table} (Class II), \texttt{MERGE Table} (Class IV) and \texttt{MERGE Column} (Class III). The results are summarized in Table \ref{tab:SMOInverseExtract}. Each operator as well as its associated inverse is formalized as schema mapping $\mathcal{M} = (S_t, S_{t+1}, \Sigma)$ respectively $\mathcal{M}^\ast = (S_{t+1},S_t,\Sigma^{-1})$ with $\Sigma$ (column 2) and $\Sigma^{-1}$ (columns 4 and 6) sets of s-t tgds. The inverse types are summarized in column 3 and 5. The last three columns describe the facts including additional provenance.

Splitting, merging or deleting columns often creates lost tuples. Tuples can also be eliminated when joining tables. Thus, we distinguish two cases: duplicates (Table \ref{tab:SMOInverseExtract}, highlighted in \textcolor{cyan}{cyan}) and dangling tuples (highlighted in \textcolor{magenta}{magenta}). Duplicates are generated by the operators \texttt{DECOMPOSE Table}, \texttt{DROP Column}, \texttt{MERGE Column}, \texttt{MOVE Column} and \texttt{SPIT Column}. They are summarized in Class III. Since \texttt{MOVE Column} may produce dangling tuples as well as duplicates, this operator is also part of Class II, along with \texttt{JOIN Table}. Except for \texttt{DROP Table} and \texttt{PARTITION Table}, all operators are thus assigned to one of the classes 1 to 3. These form class IV, the class of further operators.

As seen in Table \ref{tab:SMOInverseExtract}, the formalization of some operators as set of s-t tgds is not always unique. Thus, \texttt{COPY Table}, \texttt{DECOMPOSE Table}, \texttt{ADD Column} and \texttt{COPY Column} can be represented in two different ways. The choice of the representation influences the resulting \chase-inverse type as well as type of the underlying dependencies. Even additional provenance information may affect the type.

\paragraph{\textbf{Inverse Types using Provenance}}
Reconstructing lost attribute values in \texttt{JOIN Table} provides an exact \chase-inverse instead of a relaxed inverse, whereas the restriction to the relevant tuples in \texttt{MERGE Table} provides an exact \chase-inverse instead of a result-equivalent.
Also, in the case of \texttt{MERGE Column}, additional provenance information can be used to improve the \chase-inverses significantly. Knowing the witness bases or provenance polynomials allows the reconstruction of duplicates. If we also know the inverse of function $f$ and store lost attribute values in a separate side table, it is also possible to specify an exact \chase\ inverse. Thus, provenance enables, among other things, the reconstruction of lost tuples (duplicates) and attribute values.

All in all, we obtain four classes: Class I contains all operators whose inverse type does not change due to additional provenance information. The operators of Class II and Class III deal with dangling tuples or duplicates. Here additional witness basis, provenance polynomials and side tables are indispensable. The remaining two operators are summarized in Class IV. The results of Table \ref{tab:SMOInverseExtract} are described in detail in Section \ref{sec:Examples}. 

\paragraph{\textbf{Composition of several SMOs}}
Since the SMOs operate independently within a sequence, we can determine the inverses in isolation from each other. So, the inverse mapping of composition $$\mathcal{M}^\ast = (\mathcal{M}_1 \circ ... \circ \mathcal{M}_n)^{-1} = \mathcal{M}^\ast_n \circ ... \circ \mathcal{M}^\ast_1$$ results in a composition of the inverse sub-operations $\mathcal{M}^\ast_1, ..., \mathcal{M}^\ast_n$. Thus, the inverse type of $\mathcal{M}^\ast$ corresponds to the type of the weakest partial inverse $\mathcal{M}^\ast_i$. If no inverse exists for one of the partial operations $\mathcal{M}_i$, neither for its composition $\mathcal{M} = \mathcal{M}_1 \circ ... \circ \mathcal{M}_n$.

For example, \texttt{MERGE Column} is defined as a sequence of \texttt{ADD Column} and \texttt{DROP Column} operators. \texttt{DROP Column} has a (tp-)relaxed \chase-inverse depends on whether duplicates exist and additional provenance is used. \texttt{ADD Column}, on the other hand, always has an exact \chase-inverse. So, the inverse type of \texttt{MERGE Column} corresponds to the inverse type of \texttt{DROP Column}. This results in tp-relaxed inverse with additional provenance and a relaxed inverse without. Adding side tables we can define an exact \chase-inverse, such as summarized in Table \ref{tab:SMOInverseExtract}. Same applies for \texttt{SPLIT Column} and other user-defined operators.

\begin{figure*}[ht]
\centering
\begingroup
\centering
\scalebox{0.7}{
\begin{tabular}{cccccccc} 
$R$: & \texttt{id} & \texttt{name} & & $V$: & \texttt{name} & \texttt{subject} & \\
\cline{2-3} \cline{6-7}
& 1 & Alice & \textcolor{orange}{$r_1$} & & Alice & Math & \textcolor{orange}{$s_1$} \\
& \textcolor{magenta}{$\pmb{2}$} & \textcolor{magenta}{$\pmb{Bob}$} & \textcolor{orange}{$r_2$} & & Alice & IT & \textcolor{orange}{$s_2$} \\
\end{tabular}
\xymatrix{
\ar[r]^{\chase_\mathcal{M}} & \quad \\
}
\begin{tabular}{ccccc}
$T$: & \texttt{id} & \texttt{name} & \texttt{subject} & \\
\cline{2-4}
& 1 & Alice & Math & \textcolor{orange}{$r_1 \cdot s_1$} \\
& 1 & Alice & IT & \textcolor{orange}{$r_1 \cdot s_2$} \\
\end{tabular}
\colorbox[gray]{0.95}{ + 
\begin{tabular}{ccc}
& \texttt{id} & \texttt{name} \\
\cline{2-3}
& \textcolor{magenta}{$\pmb{2}$} & \textcolor{magenta}{$\pmb{Bob}$} \\
\end{tabular} 
\begin{tabular}{l}
\\
\textcolor{orange}{$r_2$} \\
\end{tabular}}
\xymatrix {
\ar[r]^{\chase_{\mathcal{M}^\ast}} & \quad \\
}
\begin{tabular}{ccccccc}
$R$: & \texttt{id} & \texttt{name} & ~ & $V$: & \texttt{name} & \texttt{subject} \\
\cline{2-3} \cline{6-7}
& 1 & Alice & & & Alice & Math \\
& \cellcolor[gray]{0.95}{2} & \cellcolor[gray]{0.95}{Bob} & & & Alice & IT \\
& & & & & 
\end{tabular}
}
\endgroup
\caption{\texttt{JOIN Table} with dangling tuple (highlighted in \textcolor{magenta}{magenta}); extended by provenance (highlighted in \textcolor{orange}{orange}) and side table (highlighted in \textcolor{gray}{gray})}
\label{fig:join}
\end{figure*}

\paragraph{\textbf{Similarities to Query Evaluation}}
Processed by the \chase\, schema modification and evaluation queries have similar characteristics. \texttt{MERGE Table}, for example, corresponds to union $\cup$, \texttt{PARTITION Table} to constant or attribute selection $\sigma_{A_i \theta c}$ respectively $\sigma_{A_i \theta A_j}$, and \texttt{JOIN Table} refers to the natural join $\bowtie$. It follows that the inverse types match those of \cite{AH18b}.
In the case of evaluation queries, about half of the inverses can be improved by additional provenance information. The same is true for evolution. In both cases, the (tp)-relaxed \chase-inverse dominates. However, an exact \chase-inverse occurs considerably more frequently in evolution than in evaluation query inversion, see Table \ref{tab:SMOInverseExtract} and \cite{AH18b}.

\section{The four Classes of SMOs}
\label{sec:Examples}

The most common scheme modifications can be summarized in 15 schema modification operators (SMOs). Some operators have an exact inverse function, but the majority does not. Their inverses are quasi-inverses of the types (tp-)relaxed ($\preccurlyeq_\textrm{tp}$) or result equivalent ($\leftrightarrow$). Remedy here is the addition of provenance, which improves this \chase-inverse type in many cases. For this we distinguish four different classes of SMOs: provenance invariant (Class I), dangling tuples (Class II), duplicates (Class III), and special SMOs (Class IV).
We will examine the four classes one by one. For this, we formulate generally valid Lemmas, one for each class. Each class is presented by one representative. 

Let be $R$, $V$ and $T$ three relations over the attributes \texttt{id}, \texttt{name} and \texttt{subject}. The concrete schemas are given directly in the respective proof. Let further be $r_i$ tuple ID of $i$-th tuple, $w_j$ witness basis and $p_j$ provenance polynomial. 
For a source instance $I$, $J = \chase_{\mathcal{M}^\ast}(I)$ is called \textit{evaluation result} and $I^\ast = \chase_{\mathcal{M}^\ast}(J) = \chase_{\mathcal{M}^\ast}(\chase_\mathcal{M}((I))$ \textit{(minimal) sub-database} of $I$.

\paragraph{\textbf{Class I: Provenance Invariant}}
Adding provenance does not always change the \chase-inverse type. This affects about half of the classified evolution operations. Specifically, these include the SMOs \texttt{COPY Table}, \texttt{CREATE Table}, \texttt{PARTITION Table} and \texttt{RENAME Table} as well as \texttt{ADD Column}, \texttt{COPY Column} and \texttt{RENAME Column}. All SMOs of this class have an  exact \chase-invers.

\begin{lemma}
\label{lem:class1}
Let $\mathcal{M}= (S_t, S_{t+1}, \Sigma)$ be a schema mapping of Class I. Then the corresponding inverse function $\mathcal{M}^\ast = (S_{t+1}, S_t, \Sigma^\ast)$ is always an exact or relaxed \chase-inverse independent of additional provenance information.
\end{lemma}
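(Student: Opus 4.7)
The plan is to proceed by exhaustive case analysis over the finite list of Class I operators (\texttt{COPY Table}, \texttt{CREATE Table}, \texttt{PARTITION Table}, \texttt{RENAME Table}, \texttt{ADD Column}, \texttt{COPY Column}, \texttt{RENAME Column}, and \texttt{NOP}). For each, I would exhibit the explicit s-t tgds for $\Sigma$ and $\Sigma^\ast$ as given in the SMO table, then compute $I^\ast = \chase_{\mathcal{M}^\ast}(\chase_\mathcal{M}(I))$ and verify either the sufficient condition $I^\ast = I$ for an exact \chase-inverse, or the weaker conditions for the relaxed type, as summarized in Table \ref{tab:CHASEinverse}.

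The trivial subcases are handled first. For \texttt{NOP}, \texttt{RENAME Table}, and \texttt{RENAME Column}, the chase acts as a bijective renaming on tuples, so both the \chase\ and \bchase\ phase recover exactly $I$ without introducing null values. For \texttt{CREATE Table} the operator only adds an empty relation, whose inverse \texttt{DROP Table} removes precisely that relation, leaving the remaining tables of $I$ unchanged in both phases. In each of these cases $I^\ast = I$ holds on the nose, giving an exact \chase-inverse.

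The substantive subcases are \texttt{COPY Table}, \texttt{COPY Column}, \texttt{ADD Column}, and \texttt{PARTITION Table}. Here I would argue tuple by tuple that every row of $I$ is regenerated by $\chase_{\mathcal{M}^\ast}$. For \texttt{COPY Table}, the inverse tgd $R'(a,b,c)\rightarrow R(a,b,c)$ regenerates $R$ from its duplicate without introducing existential variables. For \texttt{PARTITION Table}, the two partitions produced by the complementary conditions $\textsf{cond}_A$ and $\neg\textsf{cond}_A$ cover $R$ disjointly, so the inverse \texttt{MERGE Table} restores the original by union. For \texttt{ADD Column} (with a user-defined constant or function over existing columns) and \texttt{COPY Column}, the added values are purely derived from information already present in the source, so the inverse \texttt{DROP Column} projects them away while preserving every pre-existing attribute, again giving $I^\ast = I$. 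Where multiple tgd formalizations are listed in Table \ref{tab:SMOInverseExtract}, I would treat each representation separately and verify that both yield an exact or relaxed inverse.

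The provenance-invariance claim then reduces to the observation that each of these inverses already returns $I$ (or a data-exchange equivalent instance admitting the required homomorphism) without consulting witness bases, provenance polynomials, or side tables; consequently, augmenting $J$ with such annotations can only leave the inverse type unchanged, not degrade it. The main obstacle I anticipate is making the treatment of \texttt{PARTITION Table} and \texttt{ADD Column} fully rigorous when the selection condition $\textsf{cond}_A$ or the user-defined function is non-trivial — specifically, ensuring that $\textsf{cond}_A$ distributes cleanly across the chase so the union of the two partitions is exactly $R$, and that the inverse tgd for \texttt{ADD Column} fires on every tuple so no row is silently dropped before the \bchase\ phase.
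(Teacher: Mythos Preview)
Your proposal is correct but takes a more exhaustive route than the paper. The paper's proof does not perform a case analysis at all: it declares ``W.l.o.g.\ we will prove the statement for \texttt{COPY Table}'', writes down the single s-t tgd $\Sigma = \{R(a,b,c) \rightarrow R'(a,b,c) \wedge V(a,b,c)\}$ together with its swapped inverse, computes $I^\ast = \chase_{\mathcal{M}^{-1}}(\chase_\mathcal{M}(I)) = I$ symbolically for a generic instance $I = \{R(x_{i_1},x_{i_2},x_{i_3}) \mid i \in \mathbb{N}\}$, and concludes. It then observes that since the inverse is already exact, provenance adds nothing. Your plan, by contrast, treats all eight Class~I operators individually, grouping them into trivial (renaming/\texttt{NOP}/\texttt{CREATE}) and substantive (\texttt{COPY}/\texttt{ADD}/\texttt{PARTITION}) subcases. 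What you gain is genuine rigor: the paper's ``w.l.o.g.''\ is really an appeal to the reader's willingness to accept that the other seven operators behave analogously, whereas your argument would actually check this, including the edge cases you flag around $\textsf{cond}_A$ and user-defined functions. What the paper gains is brevity and a uniform template that it reuses for the other three classes, each of which is likewise argued via a single representative.
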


\begin{proof}
W.l.o.g.\,we will prove the statement for \texttt{COPY Table}. This SMO can be formalized in two different ways. It can be represented both as one or as a set of two s-t tgds. Here we consider the first case. Let $\mathcal{M}=(S_t,S_{t+1},\Sigma)$ and $\mathcal{M}^\ast = (S_{t+1},S_t,\Sigma^{-1})$ be two schema mapping with 
\begin{eqnarray*}
\Sigma & = & \{R(a,b,c) \rightarrow R'(a,b,c) \wedge V(a,b,c)\}, \\
\Sigma^{-1} & = & \{R'(a,b,c) \wedge V(a,b,c) \rightarrow R(a,b,c)\}
\end{eqnarray*}
that formalizes \texttt{Copy Column} and its inverse mapping.
W.l.o.g. let $R$ and $V$ each be restricted to three attributes. Let further $I = \{R(x_{i_1}, x_{i_2}, x_{i_3}) \bigm\vert i \in \mathbb{N}\}$ be an arbitrarily chosen source instance. 
Chasing $\mathcal{M}^\ast$ after chasing $\mathcal{M}$ yields:
\begin{eqnarray*}
I^\ast & = & \chase_{\mathcal{M}^{-1}}(\chase_\mathcal{M}(I)) \\
& = & \chase_{\mathcal{M}^{-1}}(\chase_\mathcal{M}(\{R(x_{i_1}, x_{i_2}, x_{i_3}) \bigm\vert i \in \mathbb{N}\})) \\
& = & \chase_{\mathcal{M}^{-1}}(\{R'(x_{i_1}, x_{i_2}, x_{i_3}), V(x_{i_1}, x_{i_2}, x_{i_3}) \bigm\vert i \in \mathbb{N}\}) \\
\end{eqnarray*}
\begin{eqnarray*}
& = & \{R(x_{i_1}, x_{i_2}, x_{i_3}) \bigm\vert i \in \mathbb{N}\} \\
& = & I
\end{eqnarray*}
Thus, we have no loss of information and obtain an exact \chase-inverse. 
Since we can already guarantee an exact inverse, adding provenance, i.e.\,additional witness bases or provenance polynomials over the source IDs, is not necessary here. We do not receive any additional or necessary information from it.
\end{proof}

In addition to \texttt{COPY table}, \texttt{COPY Column} can also be formalized in two different ways. In both cases an exact \chase-inverse can be specified. For all other SMOs of Class I only one formalization exists, witch guarantees an exact \chase\ inverse with and without using additional provenance.

\paragraph{\textbf{Class II: Dangling Tuples}}
Depending on the source instance, dangling tuples may occur. In this case an exact \chase-inverse can not be guaranteed. Also, the addition of witness bases and provenance polynomials may not be sufficient in some cases. However, if we note the lost tuples in so-called \textit{side tables}, we can also specify exact \chase-inverses. This concerns the SMOs \texttt{JOIN table} and \texttt{MOVE Column}.

\begin{lemma}
\label{lem:class2}
Let $\mathcal{M} = (S_t, S_{t+1}, \Sigma)$ be a schema mapping of Class II. Then the corresponding inverse function $\mathcal{M}^\ast = (S_{t+1}, S_t, \Sigma^\ast)$ is exact if we do not have dangling tuples. However, if such tuples are present, the \chase-inverse can be defined as result equivalent or relaxed. By using provenance as well as additional side tables, we can specify an exact \chase\ inverse, too. 
\end{lemma}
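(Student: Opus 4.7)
The plan is to prove Lemma \ref{lem:class2} along the same lines as Lemma \ref{lem:class1}, using \texttt{JOIN Table} as a representative of Class II and splitting the analysis into three cases matching the three clauses of the statement. I fix $\mathcal{M}=(S_t,S_{t+1},\Sigma)$ with $\Sigma = \{R(a,b) \wedge V(a,c) \rightarrow T(a,b,c)\}$ and its inverse $\mathcal{M}^\ast=(S_{t+1},S_t,\Sigma^{-1})$ with $\Sigma^{-1}=\{T(a,b,c)\rightarrow R(a,b)\wedge V(a,c)\}$, following the formalization in Table \ref{tab:SMOInverseExtract}, and I let $I$ be an arbitrary source instance over $R$ and $V$.

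For the first case, I assume $I$ contains no dangling tuples, i.e.\ every tuple of $R$ has at least one matching partner in $V$ on attribute $a$ and vice versa. Then every atom of $I$ triggers $\Sigma$, so $\chase_\mathcal{M}(I)$ is exactly the natural join, and back-chasing $\Sigma^{-1}$ projects this join onto $R$ and $V$ to recover every original tuple. A tuple-by-tuple calculation analogous to the one carried out for \texttt{COPY Table} in the proof of Lemma \ref{lem:class1} then yields $I^\ast = I$, and hence an exact \chase-inverse.

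For the second case, I pick a source instance in which some $R(a_0,b_0)$ has no matching $V(a_0,\cdot)$ (the dangling tuple $(2,Bob)$ in Figure \ref{fig:join}; the symmetric case is identical). Such a tuple triggers no body of $\Sigma$, does not appear in $\chase_\mathcal{M}(I)$, and therefore cannot be reproduced by $\chase_{\mathcal{M}^\ast}$, so $I^\ast \subsetneq I$ and exactness is impossible. The inclusion itself is a homomorphism $h : I^\ast \to I$, and because the dropped tuples contribute no atoms to either $\chase_\mathcal{M}(I)$ or $\chase_\mathcal{M}(I^\ast)$, I obtain $\chase_\mathcal{M}(I^\ast) = \chase_\mathcal{M}(I)$, i.e.\ data exchange equivalence $I^\ast \leftrightarrow_\mathcal{M} I$. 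Consulting Table \ref{tab:CHASEinverse}, these two facts are precisely the necessary and sufficient conditions for a relaxed \chase-inverse (and \emph{a fortiori} for a result-equivalent one).

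For the third case, I attach a provenance ID $r_i$ (resp.\ $s_j$) to each source tuple as in Figure \ref{fig:join}, and extend the target schema by a side table $D$ that stores every source tuple not participating in any join together with its ID. I then augment $\Sigma^{-1}$ by two tgds that read $D$ and re-emit its entries into $R$ respectively $V$, so that the combined back-chase reinserts exactly the tuples previously omitted, giving $I^\ast = I$ and thus an exact \chase-inverse. The main obstacle I expect is making this side-table extension fit the s-t tgd framework cleanly: the side table has to be treated as a genuine additional source relation of $\mathcal{M}^\ast$ (so that the combined back-chase is still a legitimate \chase\ instance that terminates, rather than an ad-hoc post-processing step), and one has to argue that the provenance tokens attached in the \chase\ phase suffice to identify precisely the dangling tuples to be stored.
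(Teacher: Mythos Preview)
Your proposal follows essentially the same approach as the paper: both use \texttt{JOIN Table} as the representative of Class II and split the argument into the same three cases (no dangling tuples $\Rightarrow$ exact; dangling tuples $\Rightarrow$ relaxed/result-equivalent; provenance plus side table $\Rightarrow$ exact). The paper's treatment is actually less formal than yours: it works through a single concrete instance (the Alice/Bob example of Figure~\ref{fig:join}) rather than an arbitrary $I$, and it simply asserts the inverse type in each case rather than checking the conditions of Table~\ref{tab:CHASEinverse} explicitly as you do in Case~2. Your concern about fitting the side-table mechanism into the s-t~tgd framework is legitimate and is a point the paper also leaves informal---it just says the dangling tuple ``can be added back'' without spelling out the augmented $\Sigma^{-1}$. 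One minor discrepancy: the paper's running formalization joins on the \texttt{name} attribute ($b=d$) rather than on the first attribute $a$ as in Table~\ref{tab:SMOInverseExtract}; this is cosmetic and does not affect the argument.
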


We will take a closer look at \texttt{JOIN Table}. For this, let be $\mathcal{M} = (S_t,S_{t+1},\Sigma)$ and $\mathcal{M}^\ast=(S_{t+1},S_t,\Sigma^{-1})$ two schema mappings with
\begin{eqnarray*}
\Sigma & = & \{R(a,b) \wedge V(c,d) \wedge b = d \rightarrow T(a,b,c)\}, \\
\Sigma^{-1} & = & \{T(a,b,c) \rightarrow R(a,b) \wedge V(c,d) \wedge b=d\}.
\end{eqnarray*}
Let further be $R(\texttt{id}, \texttt{name})$, $V(\texttt{name}, \texttt{subject})$ and $T(\texttt{id}, \texttt{name},\texttt{sub-}$ $\texttt{ject})$ three relations with an arbitrarily chosen source instance. Chasing $\mathcal{M}$ after chasing $\mathcal{M}^\ast$ returns
\begin{eqnarray*}
I^\ast & = & \chase_{\mathcal{M}^\ast}(\chase_\mathcal{M}(I)) \\
& = & \chase_{\mathcal{M}^\ast}(\chase_\mathcal{M}(\{R(1, \textrm{Alice}), R(2,\textrm{Bob}), \\
& & \quad V(\textrm{Alice},\textrm{Math}),V(\textrm{Alice},\textrm{IT})\})) \\
& = & \chase_{\mathcal{M}^\ast}(\{T(1,\textrm{Alice},\textrm{Math}),T(1, \textrm{Alice},\textrm{IT})\}) \\
& = & \{R(1, \textrm{Alice}),V(\textrm{Alice},\textrm{Math}),V(\textrm{Alice}, \textrm{IT})\} \\
& \preccurlyeq & I
\end{eqnarray*}

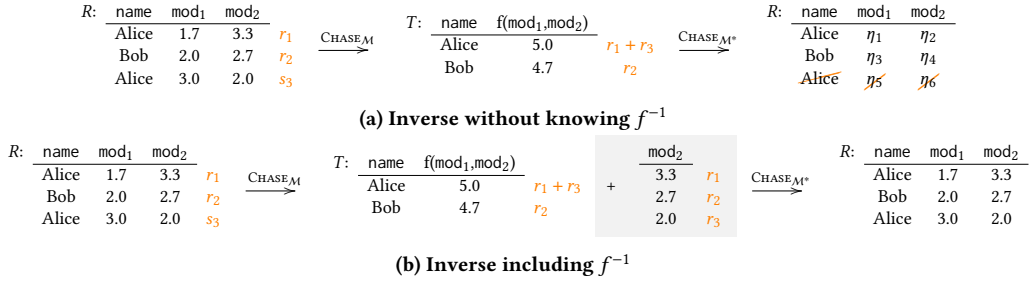
\begin{figure*}[t]
\center
\begin{subfigure}{1\textwidth}
\begingroup
\centering
\scalebox{0.77}{
\begin{tabular}{ccccc} 
$R$: & \texttt{name} & $\texttt{mod}_1$ & $\texttt{mod}_2$ & \\
\cline{2-4}
& Alice & 1.7 & 3.3 & \textcolor{orange}{$r_1$} \\
& Bob & 2.0 & 2.7 & \textcolor{orange}{$r_2$} \\
& Alice & 3.0 & $2.0$ & \textcolor{orange}{$s_3$}
\end{tabular}
\xymatrix{
\ar[r]^{\chase_\mathcal{M}} & \quad \\
}
\begin{tabular}{ccccl}
$T$: & \texttt{name} & f($\texttt{mod}_1$,$\texttt{mod}_2$) & \\
\cline{2-3}
& Alice & 5.0 & \textcolor{orange}{$r_1 + r_3$} \\
& Bob & 4.7 & \textcolor{orange}{$r_2$} \\
\end{tabular}
\xymatrix {
\ar[r]^{\chase_{\mathcal{M}^\ast}} & \quad \\
}
\begin{tabular}{cccccc}
$R$: & \texttt{name} & $\texttt{mod}_1$ & $\texttt{mod}_2$ \\
\cline{2-4}
& Alice & $\eta_1$ & $\eta_2$ \\
& Bob & $\eta_3$ & $\eta_4$ \\
& \colorcancel{Alice}{orange} & \colorcancel{$\eta_5$}{orange} & \colorcancel{$\eta_6$}{orange}
\end{tabular}
}
\caption{Inverse without knowing $f^{-1}$}
\label{fig:MergeCol-1}
\endgroup
\end{subfigure}
\begin{subfigure}{1\textwidth}
\begingroup
\centering
\scalebox{0.77}{
\begin{tabular}{ccccl} 
$R$: & \texttt{name} & $\texttt{mod}_1$ & $\texttt{mod}_2$ & \\
\cline{2-4}
& Alice & 1.7 & 3.3 & \textcolor{orange}{$r_1$} \\
& Bob & 2.0 & 2.7 & \textcolor{orange}{$r_2$} \\
& Alice & 3.0 & 2.0 & \textcolor{orange}{$s_3$}
\end{tabular}
\xymatrix{
\ar[r]^{\chase_\mathcal{M}} & \quad \\
}
\begin{tabular}{cccl}
$T$: & \texttt{name} & f($\texttt{mod}_1$,$\texttt{mod}_2$) & \\
\cline{2-3}
& Alice & 5.0 & \textcolor{orange}{$r_1 + r_3$} \\
& Bob & 4.7 & \textcolor{orange}{$r_2$} \\
\end{tabular}
\colorbox[gray]{0.95}{ + 
\begin{tabular}{cc}
& $\texttt{mod}_2$ \\
\cline{2-2}
& 3.3 \\
& 2.7 \\
& 2.0 \\
\end{tabular} 
\begin{tabular}{l}
\\
\textcolor{orange}{$r_1$} \\
\textcolor{orange}{$r_2$} \\
\textcolor{orange}{$r_3$} \\
\end{tabular}}
\xymatrix {
\ar[r]^{\chase_{\mathcal{M}^\ast}} & \quad \\
}
\begin{tabular}{cccccc}
$R$: & \texttt{name} & $\texttt{mod}_1$ & $\texttt{mod}_2$ \\
\cline{2-4}
& Alice & 1.7 & 3.3 \\
& Bob & 2.0 & 2.7 \\
& Alice & 3.0 & 2.0
\end{tabular}
}
\caption{Inverse including $f^{-1}$}
\label{fig:MergeCol-2}
\endgroup
\end{subfigure}
\caption{\texttt{MERGE Column} using different \chase-inverses $\mathcal{M}^\ast$; extended by provenance (highlighted in \textcolor{orange}{orange}) and side table (highlighted in \textcolor{gray}{gray})}
\label{fig:MergeCol}
\end{figure*}

Like shown in Figure \ref{fig:join}, dangling tuple (highlighted in \textcolor{magenta}{magenta}) can be problematic here. Lost tuples like $(2,\textrm{Bob})$ can not be reconstructed in the (minimal) sub-database $I^\ast$. In this case, we merely reconstruct a subset of the source instance $I$ and record a result-equivalent \chase-inverse. However, if there are no dangling tuples, an exact \chase-inverse can be guaranteed.

Adding witness basiss or provenance polynomials (highlighted in \textcolor{orange}{orange}) does not change the \chase-inverse type. These do not provide any additional information for us. However, if we additionally store the dangling tuple $r_2$ in an external side table (highlighted in \textcolor{gray}{gray}), it can be added back to the reconstructed instance $I^\ast$. Summarized we can guarantee an exact \chase-inverse, because:
\begin{eqnarray*}
I^\ast & = & \{R(1, \textrm{Alice}), S(\textrm{Alice}, \textrm{Math}), V(\textrm{Alice}, \textrm{IT})\} \cup \{R(2, \textrm{Bob})\} \\
& = & I.
\end{eqnarray*}

\texttt{MOVE Column} may contain duplicates in addition to dangling tuple. The corresponding SMO can be formulated in two different ways. It is the only SMO in our study that does not have a \chase-inverse in every case.

\paragraph{\textbf{Class III: Duplicates}}
The presence of duplicates typically permits only the specification of a relaxed \chase-inverse. SMOs belonging to this class are \texttt{DROP Column}, \texttt{MERGE Column}, \texttt{SPLIT Column} and \texttt{DECOMPOSE Table}. An improvement of the inverse type is only possible by adding provenance information. In some cases even an exact \chase-inverse can be guaranteed.

\begin{figure*}[ht]
\centering
\scalebox{0.675}{
\begin{tabular}{ccccccccccc} 
$R$: & \texttt{id} & \texttt{name} & \texttt{subject} & & & $V$: & \texttt{id} & \texttt{name} & \texttt{subject} & \\
\cline{2-4} \cline{8-10}
& 1 & Alice & Math & \textcolor{orange}{$r_1$} & & & 2 & Bob & IT & \textcolor{orange}{$s_1$} \\
& 1 & Alice & IT & \textcolor{orange}{$r_2$} & & & 1 & Alice & IT & \textcolor{orange}{$s_2$} \\
\end{tabular}
\xymatrix{
\ar[r]^{\chase_\mathcal{M}} & \quad \\
}
\begin{tabular}{ccccl}
$T$: & \texttt{id} & \texttt{name} & \texttt{subject} & \\
\cline{2-4}
& 1 & Alice & Math & \textcolor{orange}{$r_1 + s_1$}\\
& 2 & Bob & IT & \textcolor{orange}{$r_2$}\\
& 1 & Alice & IT & \textcolor{orange}{$s_2$}
\end{tabular}
\xymatrix {
\ar[r]^{\chase_{\mathcal{M}^\ast}} & \quad \\
}
\begin{tabular}{ccccccccc}
$R$: & \texttt{id} & \texttt{name} & \texttt{subject} & & $V$: & \texttt{id} & \texttt{name} & \texttt{subject} \\
\cline{2-4} \cline{7-9}
& 1 & Alice & Math & & & 1 & Alice & Math \\
& \colorcancel{2}{orange} & \colorcancel{Bob}{orange} & \colorcancel{IT}{orange} & & & 2 & Bob & IT \\
& 1 & Alice & IT & & & \colorcancel{1}{orange} & \colorcancel{Alice}{orange} & \colorcancel{IT}{orange} \\
\end{tabular}
}
\caption{\texttt{MERGE Table} extended by provenance (highlighted in \textcolor{orange}{orange})}
\label{fig:MergeTab}
\end{figure*}

\begin{lemma}
Let $\mathcal{M} = (S_t, S_{t+1}, \Sigma)$ be a schema mapping of Class III. Then the corresponding \chase-inverse $\mathcal{M}^\ast = (S_{t+1}, S_t, \Sigma^{-1})$ is (tp-)relaxed, depending on the existence of duplicates. By adding provenance information and side tables, the \chase-inverse type can be defined as tp-relaxed or exact.
\end{lemma}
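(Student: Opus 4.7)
The plan is to mirror the proof structure of Lemma \ref{lem:class2} and argue via the representative operator \texttt{MERGE Column}, which already appears in Figure \ref{fig:MergeCol}. I would fix $\mathcal{M} = (S_t, S_{t+1}, \Sigma)$ and $\mathcal{M}^\ast = (S_{t+1}, S_t, \Sigma^{-1})$ with $\Sigma = \{R(a,b,c) \rightarrow T(a,f(b,c))\}$ and $\Sigma^{-1} = \{T(a,g) \rightarrow \exists D,E : R(a,D,E)\}$, and then run \cbchase\ on a source instance in which two tuples share the $a$-value and coincide under $f$ (exactly the instance in Figure \ref{fig:MergeCol-1}). The merge step collapses the two source tuples into a single $T$-tuple and the back-chase fills the existentials by fresh nulls $\eta_i$, so $|I^\ast| < |I|$. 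The obvious homomorphism $h : I^\ast \to I$ mapping $\eta_i$ to the corresponding constants then shows that only the relaxed condition $\preccurlyeq$ from Table \ref{tab:CHASEinverse} is satisfied. Repeating the computation with an instance on which $f$ is injective on the occurring $(b,c)$-pairs gives $|I^\ast| = |I|$ and therefore the tp-relaxed bound $\preccurlyeq_\textrm{tp}$, establishing the first sentence of the lemma.

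For the second sentence I would add provenance in two stages. First, I would attach a provenance polynomial to every $T$-tuple: duplicates now manifest as a sum $r_i + r_j$ rather than a monomial, so counting the monomial summands of each polynomial recovers the original multiplicity. Replaying this count inside the back-chase restores the lost duplicate rows (still with null values in the merged positions), which upgrades the inverse type from $\preccurlyeq$ to $\preccurlyeq_\textrm{tp}$ in the duplicate case — this matches the \textcolor{orange}{orange} row of Table \ref{tab:SMOInverseExtract}. Since the existentials $D,E$ in $\Sigma^{-1}$ are still filled by nulls, we have $I^\ast \neq I$ and the inverse is not yet exact.

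Second, I would strengthen $\Sigma^{-1}$ by the explicit inverse $f^{-1}$ together with a side table that stores, for every tuple ID $r_i$, one of the two merged attribute values (the column $\texttt{mod}_2$ in Figure \ref{fig:MergeCol-2}). Replacing $\Sigma^{-1}$ by $\{T(a,g) \rightarrow R(a, f^{-1}(g,c), c)\}$ and joining with the side table on $r_i$ eliminates the nulls and produces the original $R$-tuple in every row, hence $I^\ast = I$, which is an exact \chase-inverse. The main obstacle I anticipate is arguing the claim uniformly across the whole class, since \texttt{DROP Column}, \texttt{SPLIT Column} and \texttt{DECOMPOSE Table} look different on the surface. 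I would handle this by appealing to the composition remark that follows Table \ref{tab:SMOinverse}: \texttt{SPLIT Column} is the composition of \texttt{ADD Column} and \texttt{DROP Column}, \texttt{DECOMPOSE Table} factors through two \texttt{DROP Column} steps, and \texttt{DROP Column} is itself the special case of the \texttt{MERGE Column} argument above with $f$ a projection. The subtle point in this reduction is that the side table must remain indexed by exactly the IDs carried in the witness basis so that the join inside the back-chase stays functional; I would ensure this by propagating the tuple IDs through every s-t tgd in $\Sigma^{-1}$, as is implicit in the \why- and \how-provenance columns of Table \ref{tab:SMOInverseExtract}.
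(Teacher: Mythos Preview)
Your proposal is correct and follows essentially the same approach as the paper: the paper also argues via the representative \texttt{MERGE Column}, sets up the identical $\Sigma$ and $\Sigma^{-1}$, runs \cbchase\ on an instance with an $f$-collision to obtain the (tp-)relaxed split, then upgrades first to tp-relaxed via witness bases/provenance polynomials and finally to exact via $f^{-1}$ plus a side table indexed by tuple IDs. The paper dispatches the remaining Class~III operators with a one-line ``quite similar'' remark, so your explicit reduction of \texttt{DROP Column}, \texttt{SPLIT Column}, and \texttt{DECOMPOSE Table} via composition is in fact a bit more careful than what the paper provides.
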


\begin{proof}
Let $\mathcal{M} = (S_t, S_{t+1}, \Sigma)$ and $\mathcal{M}^\ast = (S_{t+1},S_t,\Sigma^{-1})$ be two schema mappings with
\begin{eqnarray*}
\Sigma & = & \{R(a,b,c) \rightarrow T(a,f(b,c))\}, \\
\Sigma^{-1} & = & \{T(a,g) \rightarrow \exists D, E: R(a, D, E)\}
\end{eqnarray*}
formalizing \texttt{MERGE Column} and its associated inverse mapping. W.l.o.g. let $R$ be restricted to three and $T$ to two attributes. Let further $f$ be any MERGE function, $I = \{R(x_{i_1},x_{i_2},x_{i_3})_{r_i} \bigm\vert i \in \mathbb{N}$ $\wedge \ f(x_{j_2},x_{j_3}) = f(x_{k_2},x_{k_3}) \wedge \ r_i \textrm{ is tuple identifier}\}$ an instance with tuple identifier $r_i$ and duplicate in $t_j$ and $t_k$. Let $w_j$ be witness basis of duplicate $r_j$ with $w_j = w_k = \{\{r_j\},\{r_k\}\}$. So, $w_j$ contains two sets of tuple identifiers, so-called \textit{witnesses}. A witness, here $\{r_j\}$ and $\{r_k\}$, holds all tuple IDs needed for reconstructing one tuple. Two witnesses represent a duplicate.
Chasing $\mathcal{M}^\ast$ after chasing $\mathcal{M}$ then returns  
\begin{eqnarray*}
I^\ast & = & \chase_{\mathcal{M}^\ast}(\chase_\mathcal{M}(I)) \\
& = & \chase_{\mathcal{M}^\ast}(\{T(x_{i_1}, f(x_{i_2}, x_{i_3}))_{w_i} \bigm\vert i \in \mathbb{N} \\
& & \quad \wedge \ f(x_{j_2},x_{j_3}) = f(x_{k_2}, x_{k_3})\} \\
& & \quad \wedge \ w_i \textrm{ witness basis with } w_j = w_k = \{\{r_j\},\{r_k\}\}) \\
& = & \{R(x_{i_1}, \eta_{i_2}, \eta_{i_3}) \bigm\vert i \in \mathbb{N} \wedge \ \eta_{j_2} = \eta_{k_2}, \eta_{j_3} = \eta_{k_3}\} \\
& \preccurlyeq_\textrm{(tp)} & I. 
\end{eqnarray*}
Due to the existence quantifiers, the inverse s-t tgd always generates two new null values $\eta_{i_2}$ and $\eta_{i_3}$. Thus, the attribute values processed in $f$ can not be recovered. Due to the duplicate $t_j = t_k$, the reconstructed instance $I^\ast$ contains less tuples than the original instance $I$. The inverse $\mathcal{M}^\ast$ is thus relaxed without and tp-relaxed with existing duplicates. Adding provenance guarantees an tp-relaxed \chase-inverse.  In this case, $\eta_{j_2}\neq\eta_{j_3}$ as well as $\eta_{k_2}\neq\eta_{k_3}$ holds.

Let alternatively be $\mathcal{M}^\ast = (S_{t+1},S_t,\Sigma^{-1})$ inverse mapping with
\allowdisplaybreaks
\begin{eqnarray*}
\Sigma^{-1} & = & \{T(a,b)) \rightarrow R(a,f^{-1}(a,f^{-1}(g,c),c\}.
\end{eqnarray*}
This formalization can only be used in combination with provenance, since polynomials, side tables and the knowledge of $f^{-1}$ are necessary here. Then chasing $\mathcal{M}^\ast$ after chasing $\mathcal{M}$ returns
\begin{eqnarray*}
I^\ast & = & \chase_{\mathcal{M}^\ast}(\chase_\mathcal{M}(I)) \\
& = & \chase_{\mathcal{M}^\ast}(\{T(x_{i_1}, f(x_{i_2},x_{i_3}))_{p_i} \bigm\vert i \in \mathbb{N} \\
& & \quad \wedge \ f \textrm{ invertible function} \\
& & \quad \wedge \ p_i \textrm{ provenance polynomial with } p_j = p_k = r_k + r_k\} \\
& & \quad \wedge \ f(x_{j_2}, x_{j_3}) = f(x_{k_2},x_{k_3})\} \\
& & \cup \ \{(x_{i_3})_{r_i} \bigm\vert i \in \mathbb{N} \wedge \ r_i \ \textrm{tuple identifier}\}) \\
& = & \{R(x_{i_1}, f^{-1}(f(x_{i_2},x_{i_3}),x_{i_1})_{\mid_{x_{i_2}}},f^{-1}(f(x_{i_2},x_{i_3}),x_{i_1})_{\mid_{x_{i_3}}}) \bigm\vert \\
& & \quad i \in \mathbb{N} \wedge \ f(x_{j_2}, x_{j_3}) = f(x_{k_2},x_{k_3})\} \\
& = & \{R(x_{i_1}, x_{x_2}, x_{i_3}) \bigm\vert i \in \mathbb{N}\} \\
& = & I.
\end{eqnarray*}
Let be $p_i$ provenance polynomial and $a_{i_j}$ element of the side table $\{(x_{i_3})_{r_i} \bigm\vert i \in \mathbb{N} \wedge \ r_i \ \textrm{tuple identifier}\}$. Then $x_{i_2}$ can be recalculated from $f^{-1}(g,x_{i_3})$ as follows: 
\begin{itemize}
\vspace{-0.1cm}
\item attribute values $g$ and $x_{i_3}$ are explicitly contained in $J = \chase_\mathcal{M}(I)$;
\item inverse function $f^{-1}$ is given after precondition;
\item attribute value $x_{i_{3}}$ can be read from the side table using $p_i$.
\end{itemize}
Also duplicates can be reconstructed using this additional information. Overall an exact \chase-inverse can be guaranteed.
\end{proof}

Figure \ref{fig:MergeCol} shows the situation above in a concrete example. Let be $R$ a student-table where the points of the two passed modules should be summarized. The inverse without knowing $f^{-1}$ yields a relaxed \chase-inverse (see Figure \ref{fig:MergeCol-1}). However, the duplicate $(\textrm{Alice},5.0)$ in $T$ can be reconstructed using additional provenance information, here a provenance polynomial (highlighted in \textcolor{orange}{orange}). Adding a sustainable side even allows the definition of a exact \chase-inverse if $f^{-1}$ is known (see Figure \ref{fig:MergeCol-2}). Proving the other SMOs of Class III is quite similar.

\paragraph{\textbf{Class IV: Special SMOs}}
Almost all SMOs can be assigned to Classes 1 to 3. Only \texttt{MERGE Table} and \texttt{DROP Table} remain. These contain neither dangling tuple (see Class II) nor duplicates (see Class III). In contrast to Class I, however, adding provenance improves the inverse type in both cases.

\begin{lemma}
\label{lem:class41}
The SMO \texttt{MERGE Table} has a result equivalent \chase-inverse. Adding provenance information, the \chase-inverse type can be defined as exact.
\end{lemma}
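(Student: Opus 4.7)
The plan is to mirror the proof template used for Lemmas~\ref{lem:class1} and \ref{lem:class2}: fix the s-t tgd formalization of \texttt{MERGE Table} and its inverse, chase an arbitrary source instance forwards and then backwards, inspect the resulting (minimal) sub-database, and finally show how a witness basis (or a provenance polynomial over the source IDs) turns the unrestricted firing of the inverse tgds into a selective one. Concretely, I would set
\[
\Sigma = \{\, R(a,b,c) \rightarrow T(a,b,c),\ V(a,b,c) \rightarrow T(a,b,c)\,\},
\]
\[
\Sigma^{-1} = \{\, T(a,b,c) \rightarrow R(a,b,c),\ T(a,b,c) \rightarrow V(a,b,c)\,\},
\]
and let $I = \{R(x_{i_1},x_{i_2},x_{i_3})_{r_i}\mid i\in N_R\} \cup \{V(y_{j_1},y_{j_2},y_{j_3})_{s_j}\mid j\in N_V\}$ be an arbitrary source instance whose tuples carry pairwise distinct IDs $r_i$, $s_j$.

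Next, I would compute $J=\chase_\mathcal{M}(I)$, which yields $T$ populated with exactly the triples coming from $R$ and from $V$ (with polynomial $r_i + s_j$ on tuples appearing in both, as in Figure~\ref{fig:MergeTab}). Chasing $\Sigma^{-1}$ on $J$ without provenance fires both inverse tgds on every $T$-tuple, so
\[
I^\ast \;=\; \{R(a,b,c)\mid T(a,b,c)\in J\}\;\cup\;\{V(a,b,c)\mid T(a,b,c)\in J\},
\]
which in general strictly extends $I$ (every purely-$R$ tuple is duplicated into $V$ and vice versa). To establish the result-equivalent inverse, I would re-chase $I^\ast$ forwards and observe that $\chase_\mathcal{M}(I^\ast)$ produces the same $T$ as $\chase_\mathcal{M}(I)$, so $I\leftrightarrow_\mathcal{M} I^\ast$ while $I^\ast\neq I$; together with the trivial homomorphism $I\hookrightarrow I^\ast$, this matches the ``$\leftrightarrow$'' row of Table~\ref{tab:CHASEinverse}.

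For the second half, I would attach to each tuple in $J$ its witness basis $w_k$ (equivalently, read off the monomials of the provenance polynomial $p_k$) and replace the unconditional firing of $\Sigma^{-1}$ by a guarded variant: $T(a,b,c)\to R(a,b,c)$ fires exactly when $w_k$ contains an $r_i$, and $T(a,b,c)\to V(a,b,c)$ fires exactly when $w_k$ contains an $s_j$. Since every tuple in $I$ contributes exactly one monomial to the polynomial of the unique $T$-tuple it produced, this guarded chase reconstructs every original $R$- and $V$-tuple and adds nothing else, so $I^\ast = I$; the cancelled copies in Figure~\ref{fig:MergeTab} are precisely the rows whose witness does not contain an ID of the target relation.

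The main obstacle is that standard chase theory fires a tgd whenever its body has a match, so the ``guarded firing'' used in the provenance case is not literally a \chase-step in the sense of \cite{BKT17,GMS12}. I would address this by arguing, in the spirit of the earlier sections, that the witness basis acts as an additional selection predicate on the body of each inverse s-t tgd---equivalently, the provenance annotations of $J$ play the role of the side tables already used in Lemma~\ref{lem:class2}---so that the restricted chase is still a \chase\ against an enriched instance, and its result is an exact \chase-inverse of $\mathcal{M}$.
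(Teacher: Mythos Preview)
Your proposal is correct and follows essentially the same route as the paper: the same $\Sigma$ and $\Sigma^{-1}$, the observation that $I^\ast\supseteq I$ so only data-exchange equivalence survives, and the use of \why/\how\ annotations to filter the inverse tgds back to the correct source relation (exactly the cancellation shown in Figure~\ref{fig:MergeTab}). You are in fact slightly more careful than the paper, since you explicitly verify $\chase_\mathcal{M}(I^\ast)=\chase_\mathcal{M}(I)$ and you flag the issue that the provenance-guarded firing is not a plain \chase-step---the paper leaves both of these implicit.
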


The unification of two tables $R$ and $V$ can be formalized as schema mapping $\mathcal{M}$ with inverse mapping $\mathcal{M}^\ast = (S_{t+1},S_t,\Sigma^{-1})$,
\begin{eqnarray*}
\Sigma & = & \{R(a,b,c) \rightarrow T(a,b,c),V(a,b,c) \rightarrow T(a,b,c)\}, \\
\Sigma^{-1} & = & \{T(a,b,c) \rightarrow R(a,b,c), T(a,b,c) \rightarrow V(a,b,c)\}.
\end{eqnarray*}
Let further be $R$, $S$ and $T$ three relations with an arbitrarily chosen source instance $I$. Chasing $\mathcal{M}^\ast$ after chasing $\mathcal{M}$ then yields
\begin{eqnarray*}
I^\ast & = & \chase_{\mathcal{M}^\ast}(\chase_\mathcal{M}(I)) = \chase_{\mathcal{M}^\ast}(J) \neq I,
\end{eqnarray*}
whereby $I^\ast$ may contain more tuples than $I$, since $\mathcal{M}^\ast$ preserves all existing tuples like $(2,\textrm{Bob},\textrm{IT})$ and $(1,\textrm{Alice},\textrm{IT})$ in Figure \ref{fig:MergeTab}, which are not originally contained in $R$ and $T$, respectively.

However, the redundant tuples in $I^\ast$ can be identified and deleted by using \how-provenance (highlighted \textcolor{orange}{orange} in Figure \ref{fig:MergeTab}). Knowing the tuples used for calculating $J$, allows us to reassign the tuples to their source relations. We can thus reconstruct the source instance error-free and guarantee an exact \chase-inverse. It is not significant how exactly the result is produced. It is sufficient to know which tuple ids are involved in the generation of $(2, \textrm{Bob}, \textrm{IT})$ or $(1, \textrm{Alice}, \textrm{IT})$. A generalized witness basis or witness list is sufficient, too.

\begin{lemma}
\label{lem:class42}
The SMO \texttt{DROP Table} has a relaxed \chase-inverse. Adding provenance information the \chase-inverse type can be defined as tp-relaxed.
\end{lemma}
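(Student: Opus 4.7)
The plan is to follow the template of the earlier proofs: formalize \texttt{DROP Table} together with its inverse \texttt{CREATE Table} as a pair of schema mappings and then run the \cbchase\ on an arbitrary source instance. In the source schema we have a relation $R$ to be dropped; the target schema omits $R$, so $\Sigma$ contains no s-t tgd with $R$ in the head, and symmetrically $\Sigma^{-1}$ has no way of regenerating $R$. Hence $J=\chase_\mathcal{M}(I)$ forgets all tuples of $R$, and the sub-database $I^\ast=\chase_{\mathcal{M}^\ast}(J)$ has empty $R$-relation.

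For the first claim, I would verify the relaxed conditions of Table~\ref{tab:CHASEinverse} directly. A homomorphism $h:I^\ast\to I$ exists trivially (empty map on the $R$-component, identity elsewhere), and data exchange equivalence $\chase_\mathcal{M}(I)=\chase_\mathcal{M}(I^\ast)$ holds because neither instance contributes anything to the target via $R$. Both conditions together give a relaxed \chase-inverse. The inverse fails to be tp-relaxed whenever $R$ was nonempty, since then $|I^\ast|<|I|$.

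For the second claim, I would enrich the provenance apparatus by storing \where-provenance identifiers $r_i$ for each tuple of $R$ just before the evolution step, and extend $\Sigma^{-1}$ with a dependency that, for every remembered $r_i$, generates one tuple $R(\eta_{i_1},\eta_{i_2},\eta_{i_3})$ with fresh null values, in the same style as the existentially quantified inverse tgds used for \texttt{MERGE Column}. By construction $|I^\ast|=|I|$, which together with the homomorphism and data exchange equivalence upgrades the type to tp-relaxed. The main obstacle is to rule out anything stronger: since $R$ never occurs in the head of any tgd of $\Sigma$, no tuple of $J$ carries $R$-ids in its witness basis or provenance polynomial, so neither \why- nor \how-provenance can encode the dropped attribute values. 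Recovering the real values would require storing them in a side table, which effectively undoes the drop and therefore lies outside the scope claimed by the lemma.
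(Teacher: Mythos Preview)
The paper states this lemma without proof---after the statement it proceeds directly to the Conclusion---so there is no argument to compare against. Your proposal is correct and is precisely the kind of proof one would expect from the template the paper uses for the other classes: formalize \texttt{DROP Table} and its inverse \texttt{CREATE Table} as a mapping pair, run the \cbchase, and check the conditions of Table~\ref{tab:CHASEinverse}. Your verification of the relaxed case (trivial homomorphism from the empty $R$-component, data exchange equivalence because $R$ contributes nothing to either chase result) is sound, as is the observation that $|I^\ast|<|I|$ blocks tp-relaxed whenever $R$ was nonempty.

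For the provenance-enhanced part, your construction---remember one identifier $r_i$ per dropped tuple and regenerate $R(\eta_{i_1},\eta_{i_2},\eta_{i_3})$ with fresh nulls---mirrors exactly what the paper does for \texttt{MERGE Column} in the proof of Lemma~4.3 and yields $|I^\ast|=|I|$ together with the required homomorphism and data exchange equivalence. Two small remarks: first, the paper's other proofs phrase this in terms of \why/\how\ provenance rather than \where, but since only tuple identifiers (and their count) are needed here the distinction is immaterial; second, your closing paragraph about why nothing stronger than tp-relaxed is possible is a nice sanity check but not strictly required by the lemma, which only asserts that tp-relaxed \emph{can} be achieved.
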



\section*{Conclusion}
Especially for long-term data-driven studies, changing databases lead at some point to a large number of schemas. However, in the sense of reproducibility of research data each database version must be reconstructable with little effort. Nevertheless, in many cases, such an evolution can not be fully reconstructed.

This article classified the 15 most frequently used schema modification operators and defined the associated inverses for each operation. For avoiding an information loss, it furthermore defined which additional provenance information have to be stored. All in all, we define four classes dealing with dangling tuples, duplicates and provenance-invariant operators.

By using and extending the theory of schema mappings and their inverses for queries, we are able to combine data analysis applications with provenance under evolving database structures, thus enabling the reproducibility of scientific results over longer periods of time.

\end{document}